\documentclass[11pt]{article}
\usepackage[letterpaper,margin=1.25in]{geometry}
\usepackage[T1]{fontenc}
\usepackage[utf8]{inputenc}
\usepackage{newpxtext}
\usepackage{microtype}
\usepackage{authblk}
\usepackage{amsthm}
\usepackage{amsmath,amssymb,mathtools}
\usepackage{newpxmath}

\usepackage{booktabs}
\usepackage{graphicx}
\graphicspath{{figures/}}
\usepackage{float}         
\usepackage{algorithm}
\usepackage[noend]{algpseudocode}
\floatname{algorithm}{Algorithm}

\usepackage{xcolor}
\definecolor{CiteGreen}{RGB}{0,120,60}
\definecolor{LinkRed}{RGB}{160,20,20}
\definecolor{UrlBlue}{RGB}{20,60,170}
\definecolor{DraftOrange}{RGB}{200,90,0}
\usepackage[
  colorlinks=true,
  citecolor=CiteGreen,
  linkcolor=LinkRed,
  urlcolor=UrlBlue,
  bookmarksnumbered=true
]{hyperref}
\usepackage{babel}

\usepackage{cleveref}

\theoremstyle{plain}
\newtheorem{theorem}{Theorem}
\newtheorem*{theorem*}{Theorem}
\newtheorem{lemma}{Lemma}
\newtheorem{proposition}{Proposition}
\newtheorem{corollary}{Corollary}
\newtheorem{definition}{Definition}

            \theoremstyle{definition}
\newtheorem{example}{Example}

\theoremstyle{plain}

\makeatletter
\renewenvironment{proof}[1][\proofname]{\par
  \pushQED{\qed}%
  \normalfont \topsep6\p@\@plus6\p@\relax
  \trivlist
  \item[\hskip\labelsep\bfseries #1\@addpunct{}]\ignorespaces
}{%
  \popQED\endtrivlist\@endpefalse
}
\makeatother

\newcommand{\agents}{N}                       
\newcommand{\items}{M}                        
\newcommand{\alloc}{\mathcal{A}}              
\newcommand{\allocB}{\mathcal{B}}
\newcommand{\bundle}[1]{A_{#1}}
\newcommand{\subsidy}{p}                      
\newcommand{\optsubsidy}{p^{*}}               
\newcommand{\envygraph}[1]{G_{#1}}            
\newcommand{\edgew}[1]{w_{#1}}                
\newcommand{\pathw}[1]{\ell_{#1}}             
\newcommand{\SW}{\mathrm{SW}}                 
        
\newcommand{\margplus}[3]{\Delta^{+}_{#1}(#2,#3)}

\newcommand{\cost}{c}

\newcommand{\R}{\mathbb{R}}

\newcommand{\set}[1]{\left\{#1\right\}}
\newcommand{\abs}[1]{\left\lvert#1\right\rvert}

\usepackage{fancyhdr}
\begin{document}

\makeatletter
\renewcommand\@makefnmark{\hbox{\@textsuperscript{\normalfont\@thefnmark}}}
\makeatother
\renewcommand{\thefootnote}{\fnsymbol{footnote}}

\title{\textsc{Envy-Free Chore Allocation with Unit Subsidies under Negative Dichotomous Valuations
}}

\author[1]{Mainak Sarkar\footnote{Contact: mainaks23@iitk.ac.in}}
\author[1]{Soumyarup Sadhukhan\footnote{Contact: soumyarup.sadhukhan@gmail.com}}
\affil[1]{\small Indian Institute of Technology Kanpur}

\date{}

\maketitle
\renewcommand{\thefootnote}{\arabic{footnote}}
\setcounter{footnote}{0}

\begin{abstract}
We study the allocation of indivisible chores with subsidies under negative dichotomous valuations, where the marginal disutility of every chore is either zero or one. This is the chore analogue of the dichotomous goods model of Barman et al.~\cite{BKNS22}, for which a subsidy of at most one unit per agent is known to suffice for envy-freeness.

We show that the same optimal guarantee holds for chores, under no structural assumption beyond binary marginals. For every instance with negative dichotomous valuations, there exists a complete allocation $\mathcal{A}$ and a subsidy vector $p\in\{0,1\}^n$ with
\(
\sum_{i\in N} p_i\le n-1
\) such that $(\mathcal{A},p)$ is envy-free. Moreover, the allocation $\mathcal{A}$ is EF1 even before subsidies are paid. The bound is tight, and the allocation and subsidies can be computed in polynomial time in the value-oracle model. We further show that the unit-subsidy guarantee cannot, in general, be maintained if Pareto optimality is also required.

Our algorithm starts from an envy-free partial allocation produced by the algorithm of Tao et al.~\cite{tao2025existence}, assigns the remaining chores to distinct agents in a tail strongly connected component of the terminal equality graph, and determines the subsidised agents through a backward closure in the associated equality graph.
\end{abstract}

\medskip
\noindent\textbf{Keywords:}
Fair division, indivisible chores, envy-freeness, subsidies,
negative dichotomous valuations, EF1.

\medskip
\noindent\textbf{Mathematics Subject Classification (2020):}
91B32, 68W40, 91A12.
\section{Introduction}
\label{sec:intro}
Discrete fair division studies the allocation of indivisible resources among agents with heterogeneous preferences, and lies at the interface of mathematical economics and computer science~\cite{BCE16,End17}. A central fairness criterion is \emph{envy-freeness}~\cite{Fol67,Var74}: no agent prefers another agent's bundle to her own. For indivisible items, however, an envy-free allocation need not exist, as the standard one-item two-agent instance illustrates. This has motivated a large literature on relaxations of envy-freeness, including envy-freeness up to one good (EF1)~\cite{Bud11,LMMS04} and the stronger notion of envy-freeness up to any good (EFX)~\cite{CKMPSW19}.

A second and complementary line of work retains exact envy-freeness by allowing monetary subsidies. In addition to her allocated bundle $A_i$, each agent $i$ receives a subsidy $p_i\geq 0$ from a third party, and the outcome is envy-free if
\[v_i(A_i)+p_i \geq v_i(A_j)+p_j
    \qquad\text{for all }i,j.\]
Halpern and Shah~\cite{HS19} show that every instance admits an envy-freeable allocation and study the minimum subsidy required to achieve envy-freeness. Under the standard normalization that each item value is at most one,\footnote{Such a normalization is necessary when stating absolute bounds on monetary subsidies.} they show that, for a given envy-freeable allocation, a total subsidy of at most $(n-1)m$ may be required in the worst case, and conjecture that a total subsidy of $n-1$ always suffices when the allocation can also be chosen. Brustle et al.~\cite{BDNSV20} prove this conjecture for additive valuations, obtaining an allocation that is also EF1 and balanced, and show that for general monotone valuations a subsidy of at most $2(n-1)$ per agent suffices.

Barman et al.~\cite{BKNS22} obtain a substantially stronger guarantee for dichotomous valuations. When every marginal $v_i(S\cup\{g\})-v_i(S)$ belongs to $\{0,1\}$, they show that there exists an allocation admitting subsidies in $\{0,1\}^n$, with total subsidy at most $n-1$, and that such an outcome can be computed in polynomial time in the value-oracle model. This bound is tight: for $n$ agents and a single good valued at one by every agent, the $n-1$ agents who do not receive the good each require one unit of subsidy.

In this paper, we consider the corresponding problem for indivisible chores, where receiving an additional item weakly decreases an agent's utility. Chore-allocation problems arise naturally in applications such as shift assignment, on-call duties, teaching and refereeing loads, committee service, and maintenance tasks. Monetary compensation is also natural in such settings, where it may be interpreted as compensation for undertaking an undesirable task or additional workload.

We study \emph{negative dichotomous} valuations: for every agent $i$, every $S\subseteq \items$, and every $g\in \items\setminus S$, \[ v_i(S\cup\set{g})-v_i(S)\in\set{0,-1}. \] Thus, every item is a chore for every agent, and adding a chore changes an agent's utility by either zero or minus one. This domain is the sign analogue of the dichotomous goods model of~\cite{BKNS22}.

Binary marginals capture settings in which an additional task either creates a new unit of burden or can be absorbed by an existing commitment. For example, an agent already covering an on-call window may incur no additional cost from another incident during that window, or a driver already routed through a district may absorb an additional stop at no extra cost. Since whether a chore is costly may depend on the rest of the assigned bundle, the model naturally allows non-additive valuations, including valuations that are not subadditive.

Although chore allocation is formally related to goods allocation by reversing the direction of preferences, this correspondence does not generally extend to algorithmic arguments. Several works have observed that algorithms and structural results for goods need not carry over directly to chores~\cite{aziz2017algorithms,sun2023fairness, BSV21}. The same difficulty arises in our setting. The algorithm of Barman et al.~\cite{BKNS22} for dichotomous goods is incremental: it allocates the goods one at a time while maintaining an envy-freeable partial allocation whose required subsidies lie in $\{0,1\}^n$. For chores, one might naturally try to preserve the same invariant by assigning each new chore to an agent who currently receives no subsidy. This approach, however, may not succeed. There are negative dichotomous instances in which a partial allocation admits an envy-free solution with subsidies in $\{0,1\}^n$, but after introducing one additional chore, no assignment of that chore to an agent preserves the unit-subsidy guarantee; this remains true even if the previously allocated bundles are allowed to be permuted before assigning the new chore. We give such an example in Appendix~\ref{sec:whynotbkns}. Thus, the one-item extension property underlying the goods-side construction fails for chores, and the argument of Barman et al.~\cite{BKNS22} does not directly carry over.

A general subsidy bound for our setting follows from the work of Kawase et al.~\cite{kawase2025towards}. They show that, for doubly monotone valuations with marginal values bounded in absolute value by one, an EF1 allocation can be converted into an envy-free solution with a subsidy of at most $n-1$ per agent and at most $n(n-1)/2$ in total. Negative dichotomous valuations form a subclass of this domain. Thus, Kawase et al.~\cite{kawase2025towards} already imply, for our setting, a subsidy bound that is independent of the number of chores: at most $n-1$ per agent and $n(n-1)/2$ in total. The question is whether this can be strengthened to the unit-subsidy guarantee known for dichotomous goods. Our main result answers this question affirmatively.

\begin{theorem*}
For every instance with negative dichotomous valuations, there exists a complete allocation $\mathcal A$ and a subsidy vector
$p\in\{0,1\}^n$ such that $(\mathcal A,p)$ is envy-free and
\[
    \sum_{i\in N} p_i\le n-1.
\]
Moreover, $\mathcal A$ is EF1 even before subsidies are paid, and such an allocation and subsidy vector can be computed in polynomial time in the value-oracle model.
\end{theorem*}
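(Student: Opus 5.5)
We follow the outline in the abstract. \textbf{Step 1.} Run the algorithm of Tao et al.~\cite{tao2025existence} to get an envy-free partial allocation $\allocB=(B_1,\dots,B_n)$ of a subset $S\subseteq\items$. The key property is terminality: no leftover chore $g\in R=\items\setminus S$ can be added to any bundle while staying envy-free, and $\allocB$ cannot be extended further by the algorithm's augmentation moves. Because marginals are in $\set{0,-1}$, each leftover chore has marginal $-1$ for whichever agent receives it on top of her current bundle. Otherwise the algorithm would have absorbed it, since a zero-marginal addition keeps the owner's utility and makes her bundle weakly worse for others, so envy-freeness is preserved.

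\textbf{Step 2.} Form the equality graph $\envygraph{\allocB}$, with an edge $i\to j$ whenever $v_i(B_i)=v_i(B_j)$, meaning $i$ is indifferent to $j$'s bundle. Take a tail (sink) strongly connected component $C$ of its condensation. Terminality should imply $|R|\le |C|$, with $R$ handled one chore at a time if needed. The intuition is that if $|R|>|C|$, a pigeonhole or alternating-path argument would yield an envy-free extension, contradicting terminality. So we give the remaining chores to distinct agents of $C$, each receiving exactly one chore, and call the result $\alloc$. Each recipient's utility drops by at most $1$, and by terminality by exactly $1$ for her added chore. Every other bundle only becomes weakly worse to outsiders, because $v_k(B_j\cup\{g\})\le v_k(B_j)$.

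\textbf{Step 3: EF1 and subsidies.} EF1 holds immediately: removing the single added chore from $i$'s bundle restores $v_i(B_i)\ge v_i(B_j)\ge v_i(A_j)$. Envy in $\alloc$ is at most $1$, and agent $k$ envies $j$ only if $k$ received a chore and $k\to j$ was an equality edge in $\envygraph{\allocB}$. We choose the subsidised set $P$ as a backward closure: start from the agents who received a chore, and repeatedly add any agent $k$ with an edge to a non-subsidised $j$ whose bundle $k$ now values at least as much as her own after accounting for subsidies. Formally, $P$ is the set of agents with a path in the (post-assignment) equality/envy graph to a chore recipient, reversed appropriately. Set $p_i=1$ for $i\in P$. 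Envy-freeness then requires checking pairs $(k,j)$ with $k\notin P$ and $j\in P$. Here $v_k(A_k)\ge v_k(A_j)+1$ must hold, and it does because otherwise $k$ would have an equality edge into $P$ and would have been absorbed into the closure. Pairs with both endpoints in $P$, or with $k\in P$ and $j\notin P$, are fine because envy is at most one.

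\textbf{Main obstacle: showing $P\ne\agents$, so that $\sum p_i\le n-1$.} This is where the choice of the tail component $C$ matters. Since $C$ is a sink, no edges leave $C$ in $\envygraph{\allocB}$. The closure we need runs backward, though: it reaches every agent with edges \emph{into} recipients. I would try instead to orient the argument so that subsidies go to agents who are \emph{not} recipients: pay everyone outside a suitable set $Z\subseteq C$ of "happy" agents, so that the closure is taken with respect to edges pointing into $C$. Then I would argue that some agent of $C$, for instance one not receiving a chore (or, if every agent of $C$ receives one, use the cycle structure within $C$), can stay unsubsidised. Pinning down the right direction of closure and proving that it misses at least one agent is the step I expect to require the most care, likely using strong connectivity of $C$ together with terminality. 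Finally, polynomial time follows because the Tao et al.\ algorithm is polynomial, and SCC computation and closure computation are graph searches using $O(n^2)$ oracle calls.
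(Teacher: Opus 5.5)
There is a genuine gap. It sits at the step you call the main obstacle: the subsidy step is never completed, and the closure you describe fails.

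Write $S$ for your tail component and $T\subseteq S$ for the recipients. Since $S$ is strongly connected and $\emptyset\neq T\subsetneq S$, some equality arc $(t,j)$ goes from a recipient $t$ to a non-recipient $j\in S\setminus T$. Then $c_t(A_t)=c_t(X_t)+1=c_t(A_j)+1$, which forces $p_t=1$ and $p_j=0$. But $j$ reaches every recipient inside $S$. So a backward closure over the terminal equality graph (the arcs you invoke when you say $k$ ``would have an equality edge into $P$'') subsidises $j$ and violates envy-freeness. If you instead meant the post-assignment weak-envy graph, whether $j$ escapes the closure depends on exactly the fact below, which you never establish.

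The same example refutes your claim that pairs with both endpoints in $P$ are harmless because envy is at most one. Equal subsidies cancel, so such pairs need \emph{zero} envy, and that is what must be shown for a recipient facing another subsidised agent. Relatedly, you only argue $|R|\le|C|$, for an arbitrary tail component and via a pigeonhole intuition. What is needed is the strict inequality $r<|S|$. It holds for the specific component selected when (R3) halts, directly by the halting rule, not by any extension argument.

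The missing idea is a second consequence of terminality. Every equality arc $(i,j)$ inside $S$ lies on a directed cycle, so non-applicability of (R2) gives $c_i(e\mid X_j)=1$ for all $e\in R$. Together with (R1), integrality and monotonicity, this yields $c_i(X_t\cup\{e\})\ge c_i(X_i)+1$ for every $i\in S$ and every recipient $t$. The tail property and envy-freeness of $X$ give $c_i(X_i)<c_i(X_j)$ for $i\in S$, $j\notin S$. Hence no agent of $S\setminus T$ ever needs a subsidy, and the closure should add only agents \emph{outside} $S$: take $P$ to be the smallest set containing $T$ such that $i\notin S$, $(i,j)\in E$, $j\in P$ imply $i\in P$. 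Then:
\begin{itemize}
\item $P\cap S=T$, so the nonempty set $S\setminus T$ is unsubsidised and $\sum_i p_i\le n-1$.
\item Recipients do not envy other subsidised agents, by the two inequalities just stated, since $P\setminus T\subseteq N\setminus S$.
\item Unsubsidised agents of $S$ are handled by the same two inequalities.
\item Unsubsidised agents outside $S$ have no equality arc into $P$ by construction.
\end{itemize}
Your EF1 argument is correct and matches the paper, as is your observation that only recipients can envy, and only along equality arcs.
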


The total subsidy bound is tight, even for identical binary additive costs. Consider $n$ agents and $n-1$ chores with $c_i(S)=|S|$ for every agent $i$. In every complete allocation, some agent receives the empty bundle. Comparing each agent with such an agent in the envy-freeness inequalities gives $|A_i| - p_i\leq - p_h\implies p_i\geq |A_i|+p_h\geq |A_i|$, and therefore
\[
\sum_{i\in N}p_i\ge \sum_{i\in N}|A_i|=n-1.
\]

Our proof follows a different route from the incremental construction of Barman et al.~\cite{BKNS22}. We start from the polynomial-time algorithm of Tao et al.~\cite{tao2025existence}, which returns an envy-free partial allocation while leaving at most $n-1$ chores unassigned. At termination, the remaining chores are assigned to distinct agents in a suitable tail strongly connected component of the associated equality graph. We then identify the subsidised agents through a backward-closure operation in this graph. The resulting complete allocation is EF1, and assigning one unit of subsidy to the agents in this closure makes it exactly envy-free. 

The unit-subsidy guarantee cannot, in general, be strengthened to require Pareto optimality: even with two agents and identical binary submodular costs, every Pareto-optimal allocation may require a subsidy of \(2\) for some agent (see Proposition \ref{prop_1}).

Section~\ref{sec:prelim} introduces the model and the necessary preliminaries. Section~\ref{sec:main} presents the algorithm and proves the main result.

\subsection{Related work}

The literature on envy-freeness with subsidies has been developed primarily for goods. Halpern and Shah~\cite{HS19} establish the basic characterisation of envy-freeable allocations and obtain the optimal total subsidy bound of $n-1$ for binary additive valuations. Goko et al.~\cite{goko2024fair} extend the unit-subsidy guarantee to binary submodular valuations, together with strategy-proofness, while Barman et al.~\cite{BKNS22} show that no structure beyond binary marginals is required. The result most directly relevant to our setting is due to Kawase et al.~\cite{kawase2025towards}, who consider doubly monotone valuations and show that any EF1 allocation can be converted into an envy-free solution using at most $n-1$ units of subsidy per agent and $n(n-1)/2$ in total. Since negative dichotomous valuations form a subclass of doubly monotone valuations, this provides the previously known $m$-independent benchmark for our problem.

Fair allocation of indivisible chores without subsidies has also received considerable attention. Bhaskar et al.~\cite{BSV21} give polynomial-time algorithms for EF1 allocations of chores and doubly monotone instances, and show that deciding the existence of an exactly envy-free chore allocation is NP-complete already for binary additive costs. Tao et al.~\cite{tao2025existence} study binary-marginal chores and prove the existence of an EFX and Pareto-optimal allocation for binary additive costs. The additive assumption cannot be dropped in general: Lin et al.~\cite{LLTZ26} construct binary XOS and binary supermodular instances for which no complete EFX allocation exists. Our result is different in that it guarantees exact envy-freeness after subsidies throughout the full binary-marginal chore domain, while the underlying allocation is always EF1.

Subsidies have also been studied directly for chore allocation. For proportionality under additive costs, tight bounds are known for equal entitlements~\cite{wu2025revisiting}, and weighted variants have been studied in~\cite{wu2024tree,GSW25}. More recently, Lu et al.~\cite{LMS26} show that for additive instances containing both goods and chores, one unit of subsidy per agent suffices and the total bound of $n-1$ is tight. Their result, however, does not subsume the present setting: negative dichotomous valuations need not be additive. Conversely, additive valuations need not be dichotomous. Thus, prior to the present work, the unit-subsidy guarantee was known for additive chores~\cite{LMS26} and for dichotomous goods~\cite{BKNS22}, but not for non-additive dichotomous chores.

To the best of our knowledge, this is the first unit-subsidy guarantee for envy-freeness under negative dichotomous valuations.

\section{Notation and Preliminaries}
\label{sec:prelim}

We study the allocation of $m$ indivisible items among $n$ agents, with
subsidies. Write $\agents = [n]$ for the set of agents and $\items$ for the set
of items, $\abs{\items} = m$. Each agent $i \in \agents$ has a valuation
$v_i : 2^{\items} \to \R$ with $v_i(\emptyset) = 0$, and we represent an
instance by the tuple $\langle \agents, \items, \set{v_i}_{i \in \agents}
\rangle$. Algorithms operate in the standard \emph{value-oracle} model: for each
$v_i$ we assume only an oracle that returns $v_i(S)$ for a queried set
$S \subseteq \items$, and running time is measured in $n$, $m$ and the number of
oracle calls.

An \emph{allocation} $\alloc = (\bundle{1}, \dots, \bundle{n})$ is an ordered tuple of pairwise disjoint \emph{bundles}, with $\bundle{i}$ assigned to agent $i$. The allocation is \emph{complete} if $\bigcup_{i \in \agents} \bundle{i} = \items$ and \emph{partial} otherwise. The ordering matters throughout: much of the theory below concerns permuting a fixed family of bundles among the agents, and for a permutation $\sigma \in \mathbb{S}_n$ we write $\alloc_\sigma := (A_{\sigma(1)}, \dots, A_{\sigma(n)})$ for the allocation in which agent $i$ receives $A_{\sigma(i)}$. The utilitarian welfare of $\alloc$ is $\SW(\alloc) = \sum_{i \in \agents} v_i(\bundle{i})$.

\begin{definition}[Envy-Freeness]
\label{def:ef}
An allocation $\alloc = (\bundle{1},\dots,\bundle{n})$ is \emph{envy-free} (EF) iff $v_i(\bundle{i}) \ge v_i(\bundle{j})$ for all agents $i, j \in \agents$.
\end{definition}

Envy-free allocations of indivisible items need not exist, so we allow a third party to supplement the allocation with money. Agents have quasi-linear utilities: agent $i$ receiving bundle $\bundle{i}$ and subsidy $\subsidy_i$ enjoys utility $v_i(\bundle{i}) + \subsidy_i$, with money entering linearly and at the same exchange rate for every agent.

\begin{definition}[Envy-Free Solution] \label{def:efsolution}
An allocation $\alloc = (\bundle{1},\dots,\bundle{n})$ and a subsidy vector $\subsidy = (\subsidy_1,\dots,\subsidy_n) \in \R^n_{+}$ constitute an \emph{envy-free solution} $(\alloc, \subsidy)$ iff $v_i(\bundle{i}) + \subsidy_i \ge v_i(\bundle{j}) + \subsidy_j$ for all $i, j \in \agents$.
\end{definition}

An allocation $\alloc$ is \emph{envy-freeable} if some $\subsidy \in \R^n_+$ makes $(\alloc,\subsidy)$ an envy-free solution; this is a property of the allocation alone. 

We will also use the standard relaxation of envy-freeness up to one item (EF1). Since our framework allows chores, we use the two-sided form of EF1, in which one is allowed to remove at most one item from either of the two bundles being compared~\cite{kawase2025towards}. Formally:
\begin{definition}[EF1, two-sided form] \label{def:ef1}
An allocation $\alloc$ is \emph{envy-free up to one item} (EF1) iff for all $i, j \in \agents$ there exists $X \subseteq \bundle{i} \cup \bundle{j}$ with $\abs{X} \le 1$ such that $v_i(\bundle{i} \setminus X) \ge v_i(\bundle{j} \setminus X)$.
\end{definition}

For an all-goods instance, removing an item from the envious agent's own bundle can only make the comparison harder, so the definition reduces to the usual goods version of EF1, in which one item may be removed from the envied bundle~\cite{HS19}. For an all-chore instance, the situation is reversed: removing a chore from the envied bundle can only make that bundle more attractive. Hence, if an item is removed, it may, without loss, be taken from the envious agent's own bundle.

Every instance in this paper is an all-chore instance, and in this setting, Definition~\ref{def:ef1} takes a simpler form: the item removed can always be taken from the envious agent's own bundle. We record that specialisation, since it is the form that the results below establish.

\begin{definition}[EF1 for chores] \label{def:ef1chores}
A complete allocation $\alloc$ of chores is \emph{EF1} iff for all $i, j \in \agents$, either $i$ does not envy $j$, that is $v_i(\bundle i) \ge v_i(\bundle j)$, or there is a chore $e \in \bundle i$ with
\[
  v_i\bigl(\bundle i \setminus \set e\bigr) \;\ge\; v_i(\bundle j).
\]
\end{definition}

Definition~\ref{def:ef1chores} is exactly Definition~\ref{def:ef1} on an all-chore instance. For the converse, if the removed item lies in $A_j$, then
\[
    v_i(A_j\setminus\{e\})\geq v_i(A_j),
\]
so $v_i(A_i)\geq v_i(A_j\setminus\{e\})$ already implies $v_i(A_i)\geq v_i(A_j)$. Thus, whenever an item needs to be removed, it may be taken from the envious agent's own bundle.

\subsection{Negative Dichotomous Valuations}
\label{sec:negdich}

\begin{definition}[Negative dichotomous valuation] \label{def:negdich}
A valuation $v_i:2^{\items}\to\R$ with $v_i(\emptyset)=0$ is \emph{negative dichotomous} if, for every $S\subseteq\items$ and every $g\in\items\setminus S$,
\[
    \margplus{i}{S}{g}
    :=v_i(S\cup\set{g})-v_i(S)
    \in\set{-1,0}.
\]
Equivalently,
\[
    v_i(S)-v_i(S\cup\set{g})\in\set{0,1}.
\]
\end{definition}

Thus, adding an item weakly decreases an agent's valuation, so every item is a chore and $v_i$ is non-increasing:
\[
    S\subseteq T
    \quad\Longrightarrow\quad
    v_i(S)\ge v_i(T).
\]
Beyond the binary-marginal condition, we impose no further structure; in particular, the valuations need not be additive, submodular, or subadditive. This is the same level of generality considered by Barman et al.~\cite{BKNS22} for dichotomous goods.

For negative dichotomous valuations, it is often convenient to work with the corresponding non-negative \emph{cost function}
\[
    \cost_i(S):=-v_i(S).
\]
Then $\cost_i(\emptyset)=0$ and
\[
    \cost_i(S\cup\set{g})-\cost_i(S)\in\set{0,1}
\]
for every $S\subseteq\items$ and every $g\in\items\setminus S$.

\begin{definition}[Dichotomous cost function]
\label{def:dichcost}
A set function $\cost:2^{\items}\to\R$ is \emph{dichotomous} if $\cost(\emptyset)=0$ and
\[
    \cost(S\cup\set{g})-\cost(S)\in\set{0,1}
\]
for every $S\subseteq\items$ and every $g\in\items\setminus S$.
\end{definition}

Since all marginal costs are non-negative, every dichotomous cost function is automatically non-decreasing. Moreover, the correspondence
\[
    v_i \longleftrightarrow \cost_i=-v_i
\]
is a bijection between negative dichotomous valuations and dichotomous cost functions. In particular, the cost functions arising in our model belong to the same mathematical class as the dichotomous valuations studied by Barman et al.~\cite{BKNS22}; the difference is that here they represent costs rather than utilities.

In cost form, the envy-freeness condition for an allocation $\alloc=(\bundle1,\ldots,\bundle n)$ and subsidy vector $\subsidy$ becomes
\[
    \cost_i(\bundle i)-\subsidy_i
    \le
    \cost_i(\bundle j)-\subsidy_j
    \qquad\text{for all }i,j\in\agents.
\]
We use the valuation and cost representations interchangeably, according to which is more convenient.
\subsection{The Envy Graph and the Halpern--Shah Characterisation}
\label{sec:envygraph}

For an allocation $\alloc$, the \emph{envy graph} $\envygraph{\alloc}$ is the
complete weighted directed graph on the vertex set $\agents$ in which the arc
$(i,j)$ carries weight
\[
  \edgew{\alloc}(i,j) \;:=\; v_i(\bundle{j}) - v_i(\bundle{i})
  \;=\; \cost_i(\bundle{i}) - \cost_i(\bundle{j}),
\]
the envy that agent $i$ has towards agent $j$. The weight of a directed path $P$ is $\edgew{\alloc}(P) = \sum_{(i,j) \in P} \edgew{\alloc}(i,j)$, and $\pathw{\alloc}(i)$ denotes the maximum weight of any path in $\envygraph{\alloc}$ starting at $i$, the empty path of weight $0$ included.

The following two results of Halpern and Shah~\cite{HS19} underpin much of the literature on subsidies for envy-freeness. They apply to general valuation functions and hence, in particular, to the negative dichotomous valuations considered here. We state them in the notation of our model.
\begin{theorem}[\cite{HS19}] \label{thm:hs-characterisation}
For any allocation $\alloc = (\bundle{1},\dots,\bundle{n})$, the following are equivalent. \begin{enumerate}   \item[(i)] $\alloc$ is envy-freeable.   \item[(ii)] $\alloc$ maximises utilitarian welfare across all reassignments of its own bundles among the agents: for every permutation $\sigma$ over   $\agents$, $\sum_{i \in \agents} v_i(\bundle{i}) \ge    \sum_{i \in \agents} v_i(A_{\sigma(i)})$.   \item[(iii)] The envy graph $\envygraph{\alloc}$ has no positive-weight directed cycle.
\end{enumerate}
\end{theorem}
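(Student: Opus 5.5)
We prove the equivalences through the cycle (i)$\Rightarrow$(ii)$\Rightarrow$(iii)$\Rightarrow$(i). Only the definitions of an envy-free solution and of the envy graph $\envygraph{\alloc}$ are needed; no property specific to negative dichotomous valuations is used, so the argument holds for arbitrary valuations.

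\emph{(i)$\Rightarrow$(ii).} Let $(\alloc,\subsidy)$ be an envy-free solution and fix a permutation $\sigma$. For each $i$, the envy-freeness inequality with $j=\sigma(i)$ reads $v_i(\bundle i)+\subsidy_i\ge v_i(A_{\sigma(i)})+\subsidy_{\sigma(i)}$. Summing over all $i\in\agents$, the subsidy terms on the two sides coincide because $\sigma$ is a bijection, and they cancel. What remains is $\sum_i v_i(\bundle i)\ge\sum_i v_i(A_{\sigma(i)})$.

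\emph{(ii)$\Rightarrow$(iii).} Suppose, for contradiction, that $C=(i_1,\dots,i_k,i_1)$ is a directed cycle of positive weight; we may take it to be simple. Define $\sigma$ by $\sigma(i_t)=i_{t+1}$ (indices taken mod $k$) and $\sigma(i)=i$ for every $i$ off the cycle. Then
\[
\sum_i v_i(A_{\sigma(i)})-\sum_i v_i(\bundle i)=\sum_t \edgew{\alloc}(i_t,i_{t+1})=\edgew{\alloc}(C)>0,
\]
which contradicts (ii). If the given positive cycle is not simple, it decomposes into simple cycles, and at least one of them must have positive weight, so the reduction to the simple case is valid.

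\emph{(iii)$\Rightarrow$(i).} This is the substantive direction. Set $\subsidy_i:=\pathw{\alloc}(i)$, the maximum weight of a path starting at $i$, where the empty path of weight $0$ is included. The main obstacle is showing that $\pathw{\alloc}$ is well defined and finite. This is exactly where (iii) is used. Because there is no positive-weight cycle, any walk can be shortcut to a simple path by deleting its cycles, and each deletion does not decrease the weight. Hence the maximum over walks equals the maximum over the finitely many simple paths, and it is finite. Including the empty path gives $\subsidy_i\ge 0$.

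Envy-freeness then follows from a triangle inequality. For any $i,j$, prepend the arc $(i,j)$ to a maximum-weight path from $j$. This gives a walk from $i$ of weight $\edgew{\alloc}(i,j)+\subsidy_j$. By the shortcutting argument above, this weight is at most $\subsidy_i$. Therefore $\subsidy_i\ge v_i(\bundle j)-v_i(\bundle i)+\subsidy_j$, which rearranges to exactly the envy-free condition of Definition~\ref{def:efsolution}. This completes the cycle of implications.
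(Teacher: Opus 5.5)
Your proof is correct: summing the envy inequalities along $\sigma$ so that the subsidies cancel, turning a positive simple cycle into a welfare-improving cyclic reassignment, and setting $\subsidy_i=\pathw{\alloc}(i)$ (finite, and satisfying the triangle inequality because walks shortcut to simple paths without losing weight) is the standard Halpern--Shah argument. The paper gives no proof of its own and simply cites \cite{HS19}, so your argument follows the cited source; note also that your (iii)$\Rightarrow$(i) step establishes the envy-freeness half of Theorem~\ref{thm:hs-minsubsidy}.
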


\begin{theorem}[\cite{HS19}]
\label{thm:hs-minsubsidy}
For any envy-freeable allocation $\alloc$, the subsidy vector $\optsubsidy$ given by $\optsubsidy_i := \pathw{\alloc}(i)$ realises an envy-free solution $(\alloc, \optsubsidy)$, and $\optsubsidy_i \le \subsidy_i$ for every $i \in \agents$ and every $\subsidy$ such that $(\alloc,\subsidy)$ is an envy-free solution. It can be computed in strongly polynomial time by running an all-pairs longest-path computation on $\envygraph{\alloc}$.
\end{theorem}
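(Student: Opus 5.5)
The statement just given is the Halpern--Shah minimum-subsidy result, Theorem~\ref{thm:hs-minsubsidy}. I plan to prove it directly from the envy graph, using Theorem~\ref{thm:hs-characterisation}(iii). Since $\alloc$ is envy-freeable, $\envygraph{\alloc}$ has no positive-weight directed cycle. Consequently, for every $i$ the maximum weight over all walks starting at $i$ is attained by a simple path: any walk can be shortened by deleting its cycles, and each deleted cycle has weight $\le 0$, so the weight does not decrease. Hence $\pathw{\alloc}(i)$ is a well-defined finite maximum over finitely many simple paths. It is also non-negative, because the empty path has weight $0$. So $\optsubsidy\in\R^n_+$.

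\emph{Envy-freeness of $(\alloc,\optsubsidy)$.} Fix $i,j$. Prepend the arc $(i,j)$ to a maximum-weight path $P$ from $j$. If $i\notin P$, this gives a path from $i$, so $\pathw{\alloc}(i)\ge \edgew{\alloc}(i,j)+\pathw{\alloc}(j)$. If $i\in P$, the result is a walk, and by the cycle-deletion remark the same inequality holds. Substituting $\edgew{\alloc}(i,j)=v_i(\bundle j)-v_i(\bundle i)$ gives $v_i(\bundle i)+\optsubsidy_i\ge v_i(\bundle j)+\optsubsidy_j$, which is the required condition.

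\emph{Minimality.} Let $(\alloc,\subsidy)$ be any envy-free solution. Each envy-freeness constraint rearranges to $\subsidy_i\ge \edgew{\alloc}(i,j)+\subsidy_j$. Take a maximum-weight path $i=u_0,u_1,\dots,u_k$ and telescope these inequalities along it. Using $\subsidy_{u_k}\ge 0$, this yields $\subsidy_i\ge \sum_t \edgew{\alloc}(u_t,u_{t+1})=\pathw{\alloc}(i)$.

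\emph{Computation.} Negate the weights. Because there are no positive cycles, the negated graph has no negative cycles, so Floyd--Warshall (or Bellman--Ford from each vertex) computes all-pairs shortest paths in $O(n^3)$ arithmetic operations. Each $\optsubsidy_i$ is then the maximum of $0$ and the longest distance from $i$. The weights need $O(n^2)$ valuation queries. The only delicate point in the whole argument is the walk-versus-path issue in the second step, and the no-positive-cycle property resolves it.
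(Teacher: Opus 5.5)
Your proof is correct and is essentially the standard Halpern--Shah argument: no positive cycles lets you treat walks as paths, which gives $\pathw{\alloc}(i)\ge \edgew{\alloc}(i,j)+\pathw{\alloc}(j)$ for envy-freeness, and telescoping the constraints along a maximum-weight path with $\subsidy\ge 0$ gives minimality. The paper does not prove this theorem itself but imports it by citation from Halpern and Shah, so your proposal supplies exactly the argument being relied on.
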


Condition~(ii) of Theorem~\ref{thm:hs-characterisation} implies that, starting from any collection of bundles, one can obtain an envy-freeable allocation by suitably reassigning those bundles among the agents. In particular, such a reassignment can be found by solving a maximum-weight perfect matching problem between agents and bundles. Accordingly, the main issue is not the existence of an envy-freeable allocation, but rather the magnitude of the subsidies required to support one, as characterised by Theorem~\ref{thm:hs-minsubsidy}.

For negative dichotomous valuations, all bundle values are integral: indeed, $v_i(S)$ is obtained by telescoping from $v_i(\emptyset)=0$ along marginals in $\{-1,0\}$. Consequently, every arc weight $\edgew{\alloc}(i,j)$ in the envy graph is an integer, and hence, by Theorem~\ref{thm:hs-minsubsidy}, the pointwise-minimal subsidy vector $\optsubsidy$ is integral as well. Therefore, showing $\optsubsidy_i\leq 1$ for every $i$ is equivalent to showing $\optsubsidy\in\{0,1\}^n$.

Moreover, every pointwise-minimal nonnegative subsidy vector has at least one zero coordinate. Indeed, if $\optsubsidy_i>0$ for every $i$, then subtracting $\min_i\optsubsidy_i$ from every coordinate preserves all envy-freeness inequalities while producing a strictly smaller nonnegative subsidy vector, contradicting pointwise minimality. Thus, if $\optsubsidy\in\{0,1\}^n$, then automatically
\[
    \sum_{i\in\agents}\optsubsidy_i\leq n-1.
\]

\section{Unit Subsidies for Negative Dichotomous Valuations}
\label{sec:main}

This section proves the main theorem: for any number of agents $n$ and every negative dichotomous instance there is an allocation admitting an envy-free solution with a subsidy of $0$ or $1$ per agent, at most $n-1$ in total, computable in polynomial time.

Our argument starts from an envy-free \emph{partial} allocation produced by the algorithm of Tao, Wu, Yu, and Zhou~\cite{tao2025existence}, and then completes this partial allocation. We recall in \Cref{sec:m-terminal} the relevant terminal-state properties of their algorithm, including Theorem~6.1 and the three update rules used in its construction. These ingredients are due to Tao et al.~\cite{tao2025existence}. The completion procedure introduced in \Cref{sec:m-completion}, together with the subsequent lemmas and theorems, constitutes our contribution.

We work in cost form throughout, so an envy-free solution (Definition~\ref{def:efsolution}) is the requirement $(\mathcal{A},p)$ such that
\begin{equation} \label{eq:m-ef}
  \cost_i(\bundle i) - \subsidy_i \;\le\; \cost_i(\bundle j) - \subsidy_j  \qquad \text{for all } i,j \in \agents ,
\end{equation} 
and a partial allocation $X$ is envy-free (Definition~\ref{def:ef}) when
\begin{equation} \label{eq:m-pef}
  \cost_i(X_i) \;\le\; \cost_i(X_j)
  \qquad \text{for all } i,j \in \agents .
\end{equation}
For $S \subseteq \items$ and $e \in \items \setminus S$ write $\cost_i(e \mid S) := \cost_i(S \cup \set e) - \cost_i(S) \in \set{0,1}$ for a marginal, and for a partial allocation $X$ write $R(X) := \items \setminus \bigcup_{i \in \agents} X_i$ for its \emph{residue}.  Moreover, as $\cost_i$s are integer valued, for all $S,T\subseteq M$, \(\cost_i(S) < \cost_i(T) \implies \cost_i(S) \le \cost_i(T) - 1 .
 \)

To describe the algorithm of Tao, Wu, Yu, and Zhou~\cite{tao2025existence} and our completion procedure, we use the following standard graph-theoretic notions.

\begin{definition}[Strongly connected component and tail component]
\label{def:scc}
Let $G=(V,E)$ be a finite directed graph. Two vertices \(u,v\in V\) are \emph{strongly connected} if either \(u=v\), or there is a directed path from \(u\) to \(v\) and a directed path from \(v\) to \(u\). Strong connectivity is an equivalence relation on $V$, and its equivalence classes are the \emph{strongly connected components} of $G$.

A strongly connected component $S$ is a \emph{tail component} (or \emph{sink component}) if no arc leaves $S$; that is, there is no $(i,j)\in E$ with $i\in S$ and $j\in V\setminus S$.
\end{definition}

The following standard graph-theoretic fact will be used repeatedly.

\begin{lemma} \label{lem:tail-scc-exists} Every finite non-empty directed graph has at least one tail strongly connected component.
\end{lemma}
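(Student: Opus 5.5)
The plan is to use the condensation of the graph into its strongly connected components, and pick a component that is maximal with respect to reachability. Let $G=(V,E)$ be finite with $V\neq\emptyset$. Because strong connectivity is an equivalence relation (Definition~\ref{def:scc}), the components partition $V$. There are finitely many of them, and at least one exists because $V$ is non-empty.

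Next, define the condensation $H$. Its vertices are the strongly connected components of $G$. There is an arc $S\to S'$ in $H$ whenever $S\neq S'$ and some arc $(i,j)\in E$ has $i\in S$ and $j\in S'$. I claim $H$ is acyclic. Suppose instead that $H$ contained a cycle $S_1\to S_2\to\cdots\to S_k\to S_1$ with $k\ge 2$ and distinct components. Join consecutive connecting arcs of $G$ by paths inside each component, which exist by strong connectivity of each $S_t$. This yields directed paths in $G$ from any vertex of $S_1$ to any vertex of $S_2$ and back. Hence vertices of $S_1$ and $S_2$ would be strongly connected, contradicting that they are distinct equivalence classes.

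Finally, a finite non-empty acyclic digraph has a vertex of out-degree zero. To see this, start anywhere and follow out-arcs. Acyclicity forbids revisiting a vertex, so by finiteness the walk must stop at a vertex with no out-arc. A component $S$ with out-degree zero in $H$ has no arc $(i,j)\in E$ with $i\in S$ and $j\notin S$. That is exactly the definition of a tail component.

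The only step needing care is the acyclicity of $H$, specifically the concatenation of paths through components. This is routine, so I expect no real obstacle.
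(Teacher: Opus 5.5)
Your proof is correct. The paper gives no proof of this lemma; it invokes it only as a standard graph-theoretic fact, and your argument (acyclicity of the condensation, then a sink of a finite acyclic digraph) is exactly the standard justification. One cosmetic remark on the acyclicity step: strong connectivity as defined supplies paths in $G$ rather than paths ``inside each component''. Such paths do in fact stay inside the component, but your concatenation never needs that, so nothing is lost.
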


\subsection{The terminal state of the partial-allocation algorithm}
\label{sec:m-terminal}

\begin{theorem}[\cite{tao2025existence}, Theorem~6.1] \label{thm:m-twyz} For every $n$ and all dichotomous cost functions $\cost_1,\dots,\cost_n$ there is a polynomial-time algorithm returning a partial allocation $X$ that is envy-free in the sense of \eqref{eq:m-pef} and satisfies $\abs{R(X)} \le n-1$. \end{theorem}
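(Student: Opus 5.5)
We would prove Theorem~\ref{thm:m-twyz} with an explicit \emph{envy-free partial allocation with an equality graph} procedure. The procedure maintains a partial allocation $X$ satisfying \eqref{eq:m-pef} and tries to enlarge it one chore at a time. The invariant is plain envy-freeness of $X$. The potential is the number of allocated chores, possibly refined lexicographically by the vector of bundle costs, which gives termination in polynomially many steps. Start from $X=(\emptyset,\dots,\emptyset)$, which is trivially envy-free. Given $X$, build the \emph{equality graph} $G_X$ on $\agents$. It has an arc $(i,j)$ whenever $\cost_i(X_i)=\cost_i(X_j)$, that is, whenever $i$ is indifferent between her own bundle and $j$'s. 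These are the tight constraints of \eqref{eq:m-pef}. All other pairs have slack at least $1$ by integrality.

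The three update rules are as follows. (1) If some unallocated chore $e\in R(X)$ has $\cost_i(e\mid X_i)=0$ for some agent $i$, give $e$ to $i$. Then $\cost_i(X_i)$ is unchanged, and every other agent's view of $X_i$ can only increase, so envy-freeness is preserved. (2) Otherwise every residual chore has marginal $1$ for every agent. Take a tail strongly connected component $S$ of $G_X$ (Lemma~\ref{lem:tail-scc-exists}) and an agent $i\in S$, and give a residual chore $e$ to $i$. Now $\cost_i(X_i\cup\{e\})=\cost_i(X_i)+1$. Agent $i$ envies no one whose bundle had slack, but she may envy agents $j$ with $(i,j)\in G_X$, and these lie in $S$. (3) We repair this by rotating bundles along a cycle inside $S$, or by adding further chores to all of $S$ simultaneously. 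Concretely, if $|S|\le |R(X)|$, assign one residual chore to each agent of $S$. Every agent in $S$ then rises by exactly $1$. Agents outside $S$ had slack at least $1$ towards bundles in $S$, and no arc leaves $S$, so arcs $(i,j)$ with $i\in S$, $j\notin S$ had slack at least $1$ before the step. Hence envy-freeness survives, and progress is strict.

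The procedure stops when rule (1) does not apply and $|R(X)|<|S|$ for the tail component $S$ used. Since $|S|\le n$, this gives $|R(X)|\le n-1$ at termination, which is the statement. Each step allocates at least one chore, so there are at most $m$ steps. Each step needs $O(nm)$ oracle calls for the marginals and $O(n^2)$ calls to build $G_X$, which gives polynomial time.

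The main obstacle is the case where rule (1) fails and the residue is large, but we want progress without allocating to the whole tail component. Here we must check that the simultaneous increment of every agent in $S$ is safe. An agent $i\in S$ compares $X_i\cup\{e_i\}$ with $X_j\cup\{e_j\}$ for $j\in S$. The tight relation $\cost_i(X_i)=\cost_i(X_j)$ together with $\cost_i(e_j\mid X_j)$ being $0$ or $1$ could break the inequality if $\cost_i(e_j\mid X_j)=0$. This is exactly where we use the fact that rule (1) failed, in a strengthened form. We would define rule (1) to fire whenever some agent $i$ and chore $e$ satisfy $\cost_i(e\mid X_j)=0$ for some tight $j$, and then hand $X_j\cup\{e\}$ to $i$ via a swap along a path of $G_X$ ending at $j$. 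Verifying that this swap preserves \eqref{eq:m-pef} and strictly increases the potential is the delicate step. We would first try to prove it by taking a shortest such path in $G_X$ and shifting bundles backwards along it.
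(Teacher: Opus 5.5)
Your overall architecture is the algorithm the paper recalls from Tao et al.: (R1) for zero-marginal chores, and (R3), which gives one residual chore to every agent of a tail strongly connected component $S$ when $|R(X)|\ge|S|$ and halts otherwise, so that $|R(X)|\le n-1$. There are at most $m$ rounds, since every rule allocates a chore. The genuine gap is the middle rule. It is exactly what makes (R3) preserve envy-freeness, and you leave it unverified.

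After (R3), an agent $i\in S$ has cost $c_i(X_i)+1$, because (R1) fails. For a tight arc $(i,j)$ with $j\in S$ she avoids envying $j$ only if $c_i(e_j\mid X_j)=1$. Hence, before (R3), you need $c_i(e\mid X_j)=1$ for all $e\in R(X)$ and all arcs $(i,j)$ of $G$ inside $S$; this is Lemma~\ref{lem:m-terminal}(ii).

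Your strengthened rule fires on \emph{every} tight arc and realises the move by shifting bundles along a path ending at $j$. Suppose a reassignment gives $i$ the set $X_j\cup\{e\}$ and leaves every agent at her current (minimum) cost. Then it must be a permutation $\sigma$ of bundles with $\sigma(k)=k$ or $(k,\sigma(k))\in E$ for every $k$. So the arc $(i,j)$ must lie on a directed cycle of $G$. Along an open path, some agent ends up with a bundle she is not tight with, and her cost strictly rises. Envy-freeness then fails whenever some other bundle costs her less than her new one. Discarding a bundle instead lowers the number of allocated chores, so your potential decreases. Thus, for a tight arc on no directed cycle, your rule is not executable in general, and choosing a shortest path does not change this.

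Two smaller points. Agents outside $S$ need not have slack towards $S$, since arcs may enter a tail component; but monotonicity makes that claim unnecessary. Also, drop the single-agent assignment in your step (2), since it can create envy inside $S$.

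The repair is to restrict the rule to arcs on directed cycles. This is all (R3) needs, because every arc inside the strongly connected set $S$ lies on such a cycle. This is the paper's (R2): if $(i,j)$ lies on a directed cycle $C$ of $G$ and $c_i(e\mid X_j)=0$ for some $e\in R(X)$, let every agent on $C$ take her successor's bundle, and add $e$ to the bundle $X_j$ now held by $i$.
\begin{itemize}
\item By tightness, each agent on $C$ keeps her old cost.
\item Agent $i$ has $c_i(X_j\cup\{e\})=c_i(X_j)=c_i(X_i)$.
\item Every new bundle contains a distinct old bundle, so envy-freeness of $X$ and monotonicity show that everyone still holds a minimum-cost bundle.
\end{itemize}
One chore is allocated.

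When neither (R1) nor (R2) applies, your check of (R3) goes through. Take $i\in S$ and $j\ne i$. Either $(i,j)\notin E$ (in particular whenever $j\notin S$), so $c_i(X_j)\ge c_i(X_i)+1$. Or $(i,j)\in E$, in which case $j\in S$ and $c_i(X_j\cup\{e_j\})=c_i(X_i)+1$. Agents outside $S$ keep their bundles, while the other bundles only grow.
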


The proof of our main result uses not only the statement of Theorem~\ref{thm:m-twyz}, but also structural properties of the terminal state of the algorithm of~\cite{tao2025existence}. We therefore recall the relevant part of that algorithm. It maintains an envy-free partial allocation $X=(X_1,\dots,X_n)$ together with the \emph{equality graph}
\begin{equation}
\label{eq:m-graph}
  G=(\agents,E),
  \qquad
  (i,j)\in E
  \iff
  i\ne j \ \text{ and }\ \cost_i(X_i)=\cost_i(X_j).
\end{equation}
While $R(X)\ne\emptyset$, the algorithm considers the following rules in the stated order.
\begin{enumerate}
  \item[(R1)] If there exist $e\in R(X)$ and $i\in\agents$ such that
        $\cost_i(e\mid X_i)=0$, assign $e$ to $i$.

  \item[(R2)] Otherwise, if there exist $e\in R(X)$ and an arc      $(i,j)\in E$ lying on a directed cycle of $G$ such that $\cost_i(e\mid X_j)=0$, rotate the bundles along that cycle and assign $e$ to $i$.

 \item[(R3)] Otherwise, select a tail strongly connected component $S$ of $G$. If $\abs{R(X)}\geq\abs S$, assign distinct residual chores to the agents of $S$, one chore per agent. Otherwise, halt. \end{enumerate}           Consequently, if the algorithm terminates with \(R(X)\neq\emptyset\), then neither (R1) nor (R2) is applicable at the terminal state. Moreover, if \(S\) denotes the tail strongly connected component selected in the terminating application of (R3), then
\[
    \abs{R(X)}<\abs S.
\]

For the remainder of this section, let $X$ denote the terminal partial allocation of the algorithm, and write
\[
    R:=R(X), \qquad r:=\abs R.
\]
Unless stated otherwise, let $G=(\agents,E)$ denote the equality graph defined in~\eqref{eq:m-graph} at this terminal state. When \(r\ge1\), let \(S\) be the tail SCC selected at the terminating application of (R3); therefore \(r<|S|\).

\begin{lemma}
\label{lem:m-terminal}
Suppose $r \ge 1$. Then
\begin{enumerate}
  \item[(i)] $\cost_i(e \mid X_i) = 1$ for every $i \in \agents$ and every $e \in R$;
  \item[(ii)] $\cost_i(e \mid X_j) = 1$ for every arc $(i,j) \in E$ with  $i,j \in S$ and every $e \in R$;
  \item[(iii)] $\cost_i(X_i) < \cost_i(X_j)$ for every $i \in S$ and every $j \in \agents \setminus S$.
\end{enumerate}
\end{lemma}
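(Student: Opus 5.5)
The three claims are read off from the halting condition of the algorithm of Tao et al.~\cite{tao2025existence}. Since $r\ge 1$, the algorithm stopped at a step where $R(X)\ne\emptyset$ and it returned without assigning anything. By the order in which the rules are tried, this means (R1) and (R2) were not applicable at $X$, and (R3) was reached and halted with tail component $S$. I would take the contrapositive of each rule's applicability condition in turn, using throughout that marginal costs lie in $\set{0,1}$.

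For (i), suppose some $i\in\agents$ and $e\in R$ had $\cost_i(e\mid X_i)=0$. Then (R1) would apply, which is a contradiction. Since marginals lie in $\set{0,1}$, we get $\cost_i(e\mid X_i)=1$.

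For (ii), take an arc $(i,j)\in E$ with $i,j\in S$. Because $S$ is a strongly connected component and $i\ne j$, there is a directed path from $j$ back to $i$ inside $G$. Together with the arc $(i,j)$ this path forms a directed cycle, so $(i,j)$ lies on a directed cycle of $G$. If $\cost_i(e\mid X_j)=0$ for some $e\in R$, then (R2) would be applicable, again a contradiction. Hence the marginal equals $1$.

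For (iii), take $i\in S$ and $j\notin S$. Envy-freeness of the terminal partial allocation, which is maintained by the algorithm and guaranteed by Theorem~\ref{thm:m-twyz} via \eqref{eq:m-pef}, gives $\cost_i(X_i)\le\cost_i(X_j)$. Equality would put the arc $(i,j)$ into $E$ by \eqref{eq:m-graph}. That arc would leave $S$, contradicting that $S$ is a tail component (Definition~\ref{def:scc}). So the inequality is strict.

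The only delicate point is bookkeeping. We must make sure the graph $G$ and the component $S$ referred to are exactly those at the terminal state, where (R1) and (R2) were checked and found inapplicable. This holds because the halting application of (R3) changes neither $X$ nor $G$. Beyond that, each part is a one-line contradiction.
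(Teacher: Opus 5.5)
Your proposal is correct and follows the paper's proof essentially step for step. Parts (i) and (ii) are the contrapositives of the applicability conditions of (R1) and (R2), with (ii) using that an arc inside the strongly connected component $S$ closes a directed cycle, and part (iii) combines envy-freeness of $X$ with the fact that no arc leaves the tail component $S$. The only detail the paper states explicitly that you leave implicit is taking the return path from $j$ to $i$ without repeated vertices, so that appending $(i,j)$ gives a genuine directed cycle.
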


\begin{proof} Since $r\geq 1$, the algorithm terminates with a nonempty residue. Therefore, neither \textnormal{(R1)} nor \textnormal{(R2)} is applicable at the terminal state.

(i) If $\cost_i(e\mid X_i)=0$ for some $i\in\agents$ and $e\in R$, then \textnormal{(R1)} would be applicable. Since every marginal cost belongs to $\set{0,1}$ by Definition~\ref{def:dichcost}, it follows that
\[
    \cost_i(e\mid X_i)=1.
\]

(ii) Let $(i,j)\in E$ with $i,j\in S$. Since $S$ is strongly connected, there is a directed path from $j$ to $i$. Choosing such a path with no repeated vertices and appending the arc $(i,j)$ yields a directed cycle containing $(i,j)$. If $\cost_i(e\mid X_j)=0$ for some $e\in R$, then \textnormal{(R2)} would be applicable to this arc and chore. Hence $\cost_i(e\mid X_j)\neq 0$, and therefore
\[
    \cost_i(e\mid X_j)=1.
\]

(iii) Let $i\in S$ and $j\in\agents\setminus S$. Since $S$ is a tail strongly connected component, no arc leaves $S$; in particular, $(i,j)\notin E$. By~\eqref{eq:m-graph},
\[
    \cost_i(X_i)\neq \cost_i(X_j).
\]
On the other hand, envy-freeness of $X$, as expressed in \eqref{eq:m-pef}, gives
\[
    \cost_i(X_i)\leq \cost_i(X_j).
\]
The inequality must therefore be strict.
\end{proof}
\subsection{The completion}
\label{sec:m-completion}
Assume $r\geq 1$ and write
\(
    R=\set{e_1,\dots,e_r}.
\) Since $r<\abs S$, we may choose $r$ distinct agents \( T=\set{t_1,\dots,t_r}\subseteq S \) arbitrarily. Define the allocation $\alloc=(A_1,\dots,A_n)$ by
\begin{equation}
\label{eq:m-alloc}
  A_i :=
  \begin{cases}
    X_{t_k}\cup\set{e_k}, & i=t_k \quad (1\leq k\leq r),\\[2pt]
    X_i, & i\in\agents\setminus T.
  \end{cases}
\end{equation}
Since the agents $t_1,\dots,t_r$ are distinct and each chore in $R$ is assigned exactly once, the bundles in $\alloc$ remain pairwise disjoint and their union is $\items$. Hence, $\alloc$ is a complete allocation. We refer to the agents in $T$ as the \emph{recipients}.

We next construct a subsidy vector $\subsidy\in\set{0,1}^n$ and show that $(\alloc,\subsidy)$ is envy-free. The following definition specifies the set of agents who receive one unit of subsidy.

\begin{definition}[Subsidy set]
\label{def:m-P}
Let $P\subseteq\agents$ be the smallest set satisfying
\begin{enumerate}
  \item[(P1)] $T\subseteq P$; and
  \item[(P2)] if $i\in\agents\setminus S$ and $(i,j)\in E$ for some
        $j\in P$, then $i\in P$.
\end{enumerate}
Equivalently, initialise $P:=T$ and repeatedly add any agent $i\in\agents\setminus S$ for which $(i,j)\in E$ for some $j$ already in $P$. Continue until no further agents can be added. Since $\agents$ is finite and the set $P$ only grows, this procedure terminates.
Define
\begin{equation}
\label{eq:m-subsidy}
  \subsidy_i :=
  \begin{cases}
    1, & i\in P,\\
    0, & i\in\agents\setminus P.
  \end{cases}
\end{equation}
\end{definition}

The complete procedure is summarised in Algorithm~\ref{alg:main}. The partial-allocation phase in Line~2 is the partial-allocation algorithm of Tao et al.~\cite{tao2025existence}; the completion and subsidy construction in the subsequent lines constitute our contribution.

\begin{algorithm}[H]
\caption{EF1 allocation with at most one unit of subsidy per agent for negative dichotomous valuations}
\label{alg:main}
\begin{algorithmic}[1]
\Require Agents $N=[n]$, chores $M$, and value-oracle access to negative
dichotomous valuations $\{v_i\}_{i\in N}$.
\Ensure A complete allocation $\mathcal{A}$ that is EF1, together with a
subsidy vector $p\in\{0,1\}^n$ such that $(\mathcal{A},p)$ is envy-free and
$\sum_{i\in N}p_i\le n-1$.

\State Set $c_i=-v_i$ for every $i\in N$.
\State Run the partial-allocation algorithm of Tao et al.~\cite{tao2025existence} on $\set{\cost_i}_{i\in\agents}$, retaining the tail strongly connected component selected in the terminating application of  \textnormal{(R3)} if the residue is nonempty.
\State Let $X=(X_1,\dots,X_n)$ be its terminal partial allocation and let
\[
R=\items\setminus\bigcup_{i\in\agents}X_i .
       \]
\If{$R=\emptyset$}
    \State \Return $(X,\mathbf 0)$.
\EndIf

\State Construct the terminal equality graph $G=(\agents,E)$, where
       \[
       (i,j)\in E
       \iff
       i\neq j
       \text{ and }
       \cost_i(X_i)=\cost_i(X_j).
       \]
\State Write $R=\{e_1,\ldots,e_r\}$.
\State Let $S$ be the tail strongly connected component selected in the
       terminating application of \textnormal{(R3)}.
       \Comment{$r<\abs S$}
\State Choose arbitrary distinct agents
      $T=\{t_1,\ldots,t_r\}\subseteq S$.
      \Comment{possible since $r<|S|$}

\For{$k=1,\ldots,r$}
    \State $A_{t_k}\gets X_{t_k}\cup\{e_k\}$.
\EndFor
\For{$i\in N\setminus T$}
    \State $A_i\gets X_i$.
\EndFor

\State Initialise $P\gets T$.
\While{there exist $i\in N\setminus S$ and $j\in P$
       with $(i,j)\in E$ and $i\notin P$}
    \State $P\gets P\cup\{i\}$.
\EndWhile

\For{$i\in N$}
    \State $p_i\gets \mathbf{1}[i\in P]$.
\EndFor

\State $\mathcal{A}\gets(A_1,\ldots,A_n)$.
\State \Return $(\mathcal{A},p)$.
\end{algorithmic}
\end{algorithm}

\begin{lemma}
\label{lem:m-Pstructure}
$P\cap S=T$. Equivalently,
\[P\setminus T\subseteq \agents\setminus S.
\] Thus, every subsidised agent in $S$ is a recipient.
\end{lemma}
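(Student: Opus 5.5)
The claim follows by induction along the iterative construction of $P$ in Definition~\ref{def:m-P}. Let $P_0:=T$, and let $P_0\subseteq P_1\subseteq\cdots\subseteq P_K=P$ be the sets produced by the procedure, each obtained from its predecessor by adding a single agent. We will show by induction on $k$ that $P_k\cap S=T$.

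For the base case, $P_0=T$ and $T\subseteq S$ by the choice of recipients in \eqref{eq:m-alloc}, so $P_0\cap S=T$. For the inductive step, $P_{k+1}=P_k\cup\set{i}$, where rule (P2) requires $i\in\agents\setminus S$. Adding $i$ therefore leaves the intersection with $S$ unchanged, and $P_{k+1}\cap S=P_k\cap S=T$. Taking $k=K$ gives $P\cap S=T$.

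We also need the iterative description to coincide with the smallest set satisfying (P1)--(P2). The iterative set $P_K$ satisfies (P1) trivially, and it satisfies (P2) because the procedure halts only when no further agent can be added. Conversely, any set satisfying (P1)--(P2) contains $T$ and, by induction, contains every agent the procedure adds. Hence $P_K$ is the smallest such set, and the smallest set is well defined as the intersection of all sets satisfying (P1)--(P2).

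The equivalent formulation follows directly. Since $T\subseteq S$, the identity $P\cap S=T$ means that every element of $P\setminus T$ lies outside $S$, that is, $P\setminus T\subseteq\agents\setminus S$. The converse also holds, since $T\subseteq P$. The only point requiring any care is that (P2) restricts additions to agents in $\agents\setminus S$; this restriction is what makes the argument work, so the proof has no real obstacle. In particular, we do not need the fact that $S$ is a tail component, nor Lemma~\ref{lem:m-terminal}.
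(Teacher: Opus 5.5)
Your proof is correct, but it argues along the iterative construction rather than through the minimality clause of Definition~\ref{def:m-P}. The paper argues by contradiction. If some $i\in(P\cap S)\setminus T$ existed, then $P\setminus\set{i}$ would still satisfy (P1), because $i\notin T$. It would also satisfy (P2), because any agent that (P2) forces into the set lies in $\agents\setminus S$ and so differs from $i$. This contradicts the minimality of $P$.

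Both proofs rest on the same observation: (P2) only ever adds agents outside $S$. The paper's version works directly with the closure definition, so it needs no induction and no identification of the closure with the loop's output. Yours establishes the invariant $P_k\cap S=T$ for the loop actually executed in Algorithm~\ref{alg:main}. The price is that you must verify, as you do, that the loop's output is the least set satisfying (P1)--(P2). That verification also settles the existence of the smallest set and the ``Equivalently'' claim in Definition~\ref{def:m-P}, which the paper asserts without proof.
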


\begin{proof}
Since $T\subseteq P$ by \textnormal{(P1)} and $T\subseteq S$, we have $T\subseteq P\cap S$. Suppose, for contradiction, that there exists $i\in (P\cap S)\setminus T$. Consider
\[
    P':=P\setminus\set{i}.
\]
Since $i\notin T$, the set $P'$ still satisfies \textnormal{(P1)}. We claim that $P'$ also satisfies \textnormal{(P2)}. Let $k\in\agents\setminus S$ and suppose that $(k,j)\in E$ for some $j\in P'$. Since $P'\subseteq P$ and $P$ satisfies \textnormal{(P2)}, we have $k\in P$. Moreover, $k\neq i$, because $k\notin S$ whereas $i\in S$. Hence $k\in P'$.

Thus, $P'$ satisfies both \textnormal{(P1)} and \textnormal{(P2)}, contradicting the minimality of $P$. Therefore $P\cap S=T$.
\end{proof}

By \eqref{eq:m-subsidy}, $\subsidy\in\set{0,1}^n$. Moreover, $\abs T=r<\abs S$, so $S\setminus T\neq\emptyset$. Since Lemma~\ref{lem:m-Pstructure} gives $P\cap S=T$, every agent in $S\setminus T$ lies outside $P$ and hence receives zero subsidy. Therefore,
\begin{equation}\label{eq_1}
  \sum_{i\in\agents}\subsidy_i\leq n-1.  
\end{equation}

\begin{lemma}
\label{lem:m-expensive}
Let $i\in S$, $t\in T$, and $e\in R$. Then
\[
    \cost_i(X_t\cup\set e)\geq \cost_i(X_i)+1.
\]
\end{lemma}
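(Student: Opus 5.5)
We need to show that for $i\in S$, $t\in T\subseteq S$ and $e\in R$, we have $\cost_i(X_t\cup\set e)\ge \cost_i(X_i)+1$. The natural approach is a case split on whether $t=i$, and, when $t\neq i$, on whether the arc $(i,t)$ lies in the equality graph $E$. Throughout we use three facts: monotonicity of dichotomous costs, integrality of costs, and envy-freeness of the terminal partial allocation $X$ from \eqref{eq:m-pef}.

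\emph{Case $t=i$.} By Lemma~\ref{lem:m-terminal}(i), $\cost_i(e\mid X_i)=1$. Hence
\[
\cost_i(X_i\cup\set e)=\cost_i(X_i)+1.
\]

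\emph{Case $t\neq i$ and $(i,t)\in E$.} Then $\cost_i(X_i)=\cost_i(X_t)$ by \eqref{eq:m-graph}. Both $i$ and $t$ lie in $S$, so Lemma~\ref{lem:m-terminal}(ii) gives $\cost_i(e\mid X_t)=1$. Therefore
\[
\cost_i(X_t\cup\set e)=\cost_i(X_t)+1=\cost_i(X_i)+1.
\]

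\emph{Case $t\neq i$ and $(i,t)\notin E$.} Then $\cost_i(X_i)\ne\cost_i(X_t)$. Envy-freeness of $X$ gives $\cost_i(X_i)\le\cost_i(X_t)$, so the inequality is strict. By integrality, $\cost_i(X_t)\ge\cost_i(X_i)+1$. Since costs are non-decreasing,
\[
\cost_i(X_t\cup\set e)\ge\cost_i(X_t)\ge\cost_i(X_i)+1.
\]

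The only step needing care is the second case. There we must check that Lemma~\ref{lem:m-terminal}(ii) applies, that is, that the arc $(i,t)$ has both endpoints in $S$. This holds because $T\subseteq S$. Apart from that check, the argument is a direct combination of the earlier lemma with envy-freeness and integrality, and I expect no real obstacle.
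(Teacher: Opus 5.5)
Your proof is correct and takes essentially the same approach as the paper. The paper splits on whether $\cost_i(X_i)<\cost_i(X_t)$ (closed by integrality and monotonicity) or equality holds (closed by Lemma~\ref{lem:m-terminal}(i) when $i=t$ and Lemma~\ref{lem:m-terminal}(ii) when $i\neq t$, using $T\subseteq S$), which is your three-case split in a different order.
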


\begin{proof}
Since $X$ is envy-free, \( \cost_i(X_i)\leq \cost_i(X_t). \) If the inequality is strict, the integrality of the cost functions gives \( \cost_i(X_t)\geq \cost_i(X_i)+1, \) and the desired inequality follows from the monotonicity of $\cost_i$. Suppose instead that \(   \cost_i(X_i)=\cost_i(X_t). \) If $i=t$, then Lemma~\ref{lem:m-terminal}(i) gives
\[
    \cost_i(e\mid X_i)=1.
\]
If $i\neq t$, then $(i,t)\in E$ by \eqref{eq:m-graph}; since $i,t\in S$, Lemma~\ref{lem:m-terminal}(ii) gives
\[
    \cost_i(e\mid X_t)=1.
\] In either case,
\[
    \cost_i(X_t\cup\set e)=\cost_i(X_i)+1.
\]
\end{proof}
\subsection{Envy-freeness of the completion}
\label{sec:m-ef}
\begin{theorem}
\label{thm:m-completion}
Let $X$, $R$, and $S$ be as in \Cref{sec:m-terminal} with $r\geq 1$, let $\alloc$ be defined by~\eqref{eq:m-alloc}, and let $\subsidy$ be defined by~\eqref{eq:m-subsidy}. Then $(\alloc,\subsidy)$ is an envy-free solution. \end{theorem}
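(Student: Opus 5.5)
The plan is to verify the cost-form inequality \eqref{eq:m-ef}, $\cost_i(\bundle i)-\subsidy_i\le\cost_i(\bundle j)-\subsidy_j$, for every ordered pair $(i,j)$ by a case split on $(\subsidy_i,\subsidy_j)\in\set{0,1}^2$. Three basic facts will be used throughout.
\begin{enumerate}
\item[(a)] $\bundle i=X_i$ for $i\notin T$, and $\cost_i(\bundle i)=\cost_i(X_i)+1$ for $i\in T$, by Lemma~\ref{lem:m-terminal}(i).
\item[(b)] $\cost_i(\bundle j)\ge\cost_i(X_j)\ge\cost_i(X_i)$, by monotonicity and envy-freeness \eqref{eq:m-pef} of $X$.
\item[(c)] The costs are integers, so every strict inequality between costs improves by a full unit.
\end{enumerate}
Since $T\subseteq P$, an agent with $\subsidy_i=0$ is not a recipient, so $\bundle i=X_i$.

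\emph{Case $\subsidy_i=1,\subsidy_j=0$.} We need $\cost_i(\bundle i)\le\cost_i(\bundle j)+1$. This follows from (a) and (b): $\cost_i(\bundle i)\le\cost_i(X_i)+1\le\cost_i(\bundle j)+1$.

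\emph{Case $\subsidy_i=\subsidy_j=0$.} Here $\bundle i=X_i$, and (b) gives $\cost_i(X_i)\le\cost_i(\bundle j)$ directly.

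\emph{Case $\subsidy_i=0,\subsidy_j=1$.} We need $\cost_i(X_i)+1\le\cost_i(\bundle j)$, and I split on where $i$ lies.
\begin{enumerate}
\item[(1)] If $i\in S$ and $j\in T$, Lemma~\ref{lem:m-expensive} gives $\cost_i(\bundle j)=\cost_i(X_j\cup\set{e})\ge\cost_i(X_i)+1$.
\item[(2)] If $i\in S$ and $j\in P\setminus T$, then Lemma~\ref{lem:m-Pstructure} places $j\notin S$. Lemma~\ref{lem:m-terminal}(iii) and (c) then give $\cost_i(X_j)\ge\cost_i(X_i)+1$.
\item[(3)] If $i\notin S$: a strict inequality $\cost_i(X_i)<\cost_i(X_j)$ suffices by (c). Equality would mean $(i,j)\in E$ with $j\in P$ and $i\in\agents\setminus S$, so (P2) would force $i\in P$. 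This contradicts $\subsidy_i=0$.
\end{enumerate}
This case is the crux, and it is exactly where the backward closure (P2) and the tail property of $S$ are used.

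\emph{Case $\subsidy_i=\subsidy_j=1$.} We need $\cost_i(\bundle i)\le\cost_i(\bundle j)$. If $i\notin T$, this is (b). If $i\in T\subseteq S$, then $\cost_i(\bundle i)=\cost_i(X_i)+1$, and we need $\cost_i(\bundle j)\ge\cost_i(X_i)+1$.
\begin{enumerate}
\item[(1)] If $j=i$, the inequality is trivial.
\item[(2)] If $j\in T$, Lemma~\ref{lem:m-expensive} applies.
\item[(3)] If $j\in P\setminus T$, then $j\notin S$ by Lemma~\ref{lem:m-Pstructure}, and Lemma~\ref{lem:m-terminal}(iii) with (c) gives the bound.
\end{enumerate}

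The main obstacle is the mixed case $\subsidy_i=0,\subsidy_j=1$. Every unsubsidised agent must strictly prefer (in cost terms) each subsidised bundle by a full unit. This requires combining three ingredients: Lemma~\ref{lem:m-expensive} for recipients, the tail property for subsidised agents outside $S$, and the minimality and closure structure of $P$ for agents outside $S$.
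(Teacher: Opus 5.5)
Your proof is correct and follows essentially the same argument as the paper. You organise the cases by the subsidy pair $(p_i,p_j)$ instead of by whether $i\in T$, $i\in P\setminus T$ or $i\notin P$, but the subcases and the ingredients coincide with the paper's: Lemma~\ref{lem:m-expensive} for recipients, Lemma~\ref{lem:m-Pstructure} with Lemma~\ref{lem:m-terminal}(iii) for $j\in P\setminus T$, and the closure rule (P2) for unsubsidised $i\notin S$. The only unstated step is $i\neq j$ when you invoke $(i,j)\in E$ in the mixed case, and that is immediate since $p_i\neq p_j$.
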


\begin{proof}
Fix $i\in\agents$. Since $X_j\subseteq A_j$ for every $j\in\agents$, monotonicity of $\cost_i$ and envy-freeness of $X$ give
\begin{equation}
\label{eq:m-mono}
    \cost_i(A_j)
    \geq \cost_i(X_j)
    \geq \cost_i(X_i)
    \qquad\text{for every }j\in\agents.
\end{equation}
We verify
\(\cost_i(A_i)-\subsidy_i \leq \cost_i(A_j)-\subsidy_j \) for every $j\in\agents$.

\smallskip
\noindent\textbf{Case 1: $i\in T$.}
Then $\subsidy_i=1$, and if $e\in R$ is the residual chore assigned to $i$, Lemma~\ref{lem:m-terminal}(i) gives
\[
    \cost_i(A_i)
    =\cost_i(X_i)+\cost_i(e\mid X_i)
    =\cost_i(X_i)+1.
\]
Hence $\cost_i(A_i)-\subsidy_i=\cost_i(X_i)$. If $j\notin P$, then $j\notin T$, so $A_j=X_j$ and $\subsidy_j=0$. Thus, by envy-freeness of $X$,
\[
    \cost_i(A_j)-\subsidy_j
    =\cost_i(X_j)
    \geq \cost_i(X_i).
\] Suppose instead that $j\in P$. If $j\in T$, then Lemma~\ref{lem:m-expensive} gives \( \cost_i(A_j)\geq \cost_i(X_i)+1. \) If $j\in P\setminus T$, then Lemma~\ref{lem:m-Pstructure} implies $j\notin S$. Since $i\in S$, Lemma~\ref{lem:m-terminal}(iii) gives \(\cost_i(X_i)<\cost_i(X_j).\) By integrality, \(  \cost_i(A_j)=\cost_i(X_j)\geq \cost_i(X_i)+1. \) Thus, in either case, since $\subsidy_j=1$,
\[
    \cost_i(A_j)-\subsidy_j\geq \cost_i(X_i).
\]

\smallskip
\noindent\textbf{Case 2: $i\in P\setminus T$.} Here $A_i=X_i$ and $\subsidy_i=1$, so \(\cost_i(A_i)-\subsidy_i=\cost_i(X_i)-1. \) For every $j\in\agents$, \eqref{eq:m-mono} and $\subsidy_j\leq 1$ imply
\[
    \cost_i(A_j)-\subsidy_j
    \geq \cost_i(X_i)-1
    =\cost_i(A_i)-\subsidy_i.
\]

\smallskip
\noindent\textbf{Case 3: $i\notin P$.}
Since $T\subseteq P$, we have $i\notin T$, and therefore $A_i=X_i$ and $\subsidy_i=0$. Hence \(\cost_i(A_i)-\subsidy_i=\cost_i(X_i). \) If $j\notin P$, then $A_j=X_j$ and $\subsidy_j=0$, and envy-freeness of $X$ gives
\[
    \cost_i(A_j)-\subsidy_j
    =\cost_i(X_j)
    \geq\cost_i(X_i)=\cost_i(A_i)-\subsidy_i.
\]
Now suppose that $j\in P$. We claim that
\begin{equation}
\label{eq:m-unpaid-to-paid}
    \cost_i(A_j)\geq \cost_i(X_i)+1.
\end{equation}
If $i\notin S$, envy-freeness of $X$ gives $\cost_i(X_i)\leq\cost_i(X_j)$. Equality would imply $(i,j)\in E$ by~\eqref{eq:m-graph}; since $i\in\agents\setminus S$ and $j\in P$, condition \textnormal{(P2)} would then imply $i\in P$, a contradiction. Hence \(\cost_i(X_i)<\cost_i(X_j), \) and integrality together with monotonicity yield
\[
\cost_i(A_j)\geq\cost_i(X_j)\geq\cost_i(X_i)+1.
\] It remains to consider $i\in S$. If $j\in T$, then \eqref{eq:m-unpaid-to-paid} follows from Lemma~\ref{lem:m-expensive}. If $j\in P\setminus T$, then Lemma~\ref{lem:m-Pstructure} gives $j\notin S$, and Lemma~\ref{lem:m-terminal}(iii), together with integrality, gives
\[
    \cost_i(A_j)=\cost_i(X_j)\geq\cost_i(X_i)+1.
\]
Thus~\eqref{eq:m-unpaid-to-paid} holds in all cases. Since $\subsidy_j=1$,
\[
    \cost_i(A_j)-\subsidy_j
    \geq\cost_i(X_i)
    =\cost_i(A_i)-\subsidy_i.
\]
The envy-freeness inequality therefore holds for every pair $i,j\in\agents$. \end{proof}

\subsection{The EF1 property of the completion}

\begin{corollary}
\label{cor:ef1}
The complete allocation $\mathcal{A}$ constructed in Section~\ref{sec:m-completion} is EF1 in the sense of Definition~\ref{def:ef1chores}, and hence also in the sense of Definition~\ref{def:ef1}. More precisely:
\begin{enumerate}
    \item [(i)] every non-recipient $i\in N\setminus T$ is envy-free in  $\mathcal{A}$; and
    \item [(ii)] for every recipient $t_k\in T$,
    \[ c_{t_k}(A_{t_k}\setminus\{e_k\}) \leq c_{t_k}(A_j) \qquad\text{for every }j\in N.
    \]
\end{enumerate}
\end{corollary}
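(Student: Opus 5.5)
The argument uses only the structure of $\alloc$ from \eqref{eq:m-alloc} and the inequality \eqref{eq:m-mono}; the subsidies play no role. The two parts are exactly the two cases of Definition~\ref{def:ef1chores}: non-recipients need no removal, and recipients need only remove their residual chore.

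For part (i), fix a non-recipient $i\in\agents\setminus T$. By \eqref{eq:m-alloc} we have $A_i=X_i$, so $\cost_i(A_i)=\cost_i(X_i)$. For every $j\in\agents$ we have $X_j\subseteq A_j$. Monotonicity of $\cost_i$ and envy-freeness of $X$ in the sense of \eqref{eq:m-pef} then give $\cost_i(A_j)\ge\cost_i(X_j)\ge\cost_i(X_i)=\cost_i(A_i)$; this is exactly \eqref{eq:m-mono}. Hence $i$ envies no one in $\alloc$. In particular, the first alternative of Definition~\ref{def:ef1chores} holds for every pair $(i,j)$ with $i\notin T$.

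For part (ii), fix a recipient $t_k\in T$. Since $e_k\notin X_{t_k}$, we have $A_{t_k}\setminus\set{e_k}=X_{t_k}$. Applying \eqref{eq:m-mono} with $i=t_k$ gives $\cost_{t_k}(X_{t_k})\le\cost_{t_k}(X_j)\le\cost_{t_k}(A_j)$ for every $j$, which is the displayed inequality. Thus for every pair $(t_k,j)$ either $t_k$ does not envy $j$, or the chore $e=e_k\in A_{t_k}$ witnesses the second alternative of Definition~\ref{def:ef1chores}.

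Combining (i) and (ii), $\alloc$ is EF1 in the sense of Definition~\ref{def:ef1chores}. The remark after Definition~\ref{def:ef1chores} says the two definitions coincide on all-chore instances (take $X=\set{e_k}$, or $X=\emptyset$ when there is no envy), so $\alloc$ is also EF1 in the sense of Definition~\ref{def:ef1}. If $r=0$, Algorithm~\ref{alg:main} returns $X$ itself, which is envy-free and hence trivially EF1.

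I expect no real obstacle. The only point needing care is to note that the removed chore $e_k$ leaves exactly $X_{t_k}$, since $A_{t_k}$ is $X_{t_k}$ with the single residual chore $e_k$ added.
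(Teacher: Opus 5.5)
Your proposal is correct and follows the paper's proof essentially verbatim: for non-recipients you use $A_i=X_i$ with envy-freeness of $X$ and monotonicity, and for recipients you use $A_{t_k}\setminus\{e_k\}=X_{t_k}$ with the same chain of inequalities. Your extra remarks on the passage to Definition~\ref{def:ef1} and on the case $r=0$ are harmless additions that the paper handles elsewhere.
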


\begin{proof}
Recall that the partial allocation $X=(X_1,\ldots,X_n)$ is envy-free and that $X_j\subseteq A_j$ for every $j\in N$. Let first $i\in N\setminus T$. Then $A_i=X_i$. Hence, for every $j\in N$, envy-freeness of $X$ and monotonicity of $c_i$ give
\[
    c_i(A_i)
    =
    c_i(X_i)
    \leq
    c_i(X_j)
    \leq
    c_i(A_j).
\]
Thus, every non-recipient is envy-free in $\mathcal{A}$. Now let $i=t_k\in T$. By construction,
\[ A_i=X_i\cup\{e_k\},
    \qquad\text{and hence}\qquad
    A_i\setminus\{e_k\}=X_i.
\]
Therefore, for every $j\in N$,
\[c_i(A_i\setminus\{e_k\})
    =
    c_i(X_i)
    \leq
    c_i(X_j)
    \leq
    c_i(A_j),
\]
where the two inequalities follow from envy-freeness of $X$ and monotonicity of $c_i$, respectively. Thus, removing the residual chore assigned to a recipient eliminates all of her envy. Consequently, $\mathcal{A}$ is EF1.
\end{proof}
\subsection{The main theorem}
\label{sec:m-main}

\begin{theorem}
\label{thm:m-main}
For every instance
$\langle \agents,\items,\set{v_i}_{i\in\agents}\rangle$ with negative dichotomous valuations, there exist a complete allocation $\alloc$ and a subsidy vector $\subsidy\in\set{0,1}^n$ such that
\[
    \sum_{i\in\agents}\subsidy_i\leq n-1
\]
and $(\alloc,\subsidy)$ is an envy-free solution. Moreover, $\alloc$ is EF1 even before subsidies are paid. Such an allocation and subsidy vector can be computed in polynomial time given value-oracle access.
\end{theorem}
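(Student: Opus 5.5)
The theorem is essentially an assembly of the results already established in this section, so I will prove it by running Algorithm~\ref{alg:main} and checking each clause separately. Throughout I pass to cost form $\cost_i=-v_i$. This is legitimate because, by Definition~\ref{def:dichcost} and the remarks after it, each $\cost_i$ is a dichotomous cost function. Theorem~\ref{thm:m-twyz} then applies and yields an envy-free partial allocation $X$ with $\abs{R(X)}\le n-1$.

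I split into two cases according to whether the residue is empty.

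\emph{Case $r=0$.} Then $X$ is complete and envy-free, so I take $\alloc=X$ and $\subsidy=\mathbf 0$. Envy-freeness in the sense of \eqref{eq:m-ef} with zero subsidies is exactly \eqref{eq:m-pef}. An envy-free allocation is trivially EF1, and $\sum_i\subsidy_i=0\le n-1$.

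\emph{Case $r\ge1$.} I use the completion \eqref{eq:m-alloc} together with the subsidy set of Definition~\ref{def:m-P}.
\begin{itemize}
\item Completeness of $\alloc$ was verified immediately after \eqref{eq:m-alloc}.
\item $\subsidy\in\set{0,1}^n$ holds by \eqref{eq:m-subsidy}, and the bound $\sum_i\subsidy_i\le n-1$ is \eqref{eq_1}. That bound rests on Lemma~\ref{lem:m-Pstructure} and on $r<\abs S$ from the terminating application of (R3).
\item Envy-freeness of $(\alloc,\subsidy)$ is Theorem~\ref{thm:m-completion}.
\item EF1 before subsidies is Corollary~\ref{cor:ef1}.
\end{itemize}
Translating back via $v_i=-\cost_i$ turns \eqref{eq:m-ef} into Definition~\ref{def:efsolution}.

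The step that needs actual care is polynomial running time in the value-oracle model.
\begin{itemize}
\item Line~2 runs in polynomial time by Theorem~\ref{thm:m-twyz}. The tail component $S$ selected at termination is simply recorded by that run.
\item The equality graph $G$ needs $O(n^2)$ oracle calls, two per ordered pair.
\item Choosing $T$ and forming $\alloc$ takes $O(n+m)$ time.
\item The closure loop defining $P$ adds at least one agent per iteration, so it runs at most $n$ times, and each iteration scans $O(n^2)$ arcs. This is a backward reachability computation in $G$ restricted to sources outside $S$.
\end{itemize}

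The main obstacle, mild as it is, concerns Line~2. We must make sure that the algorithm of~\cite{tao2025existence} actually exposes the terminal $S$, and that Lemma~\ref{lem:m-terminal} is used with that same $S$ rather than an arbitrary tail component. I will state this explicitly, since the statement of Theorem~\ref{thm:m-twyz} does not mention it.
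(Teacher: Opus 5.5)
Your proposal is correct and follows essentially the same route as the paper: pass to costs, split on $r=0$ versus $r\ge 1$, invoke \eqref{eq_1}, Theorem~\ref{thm:m-completion} and Corollary~\ref{cor:ef1}, and bound the running time by counting oracle calls and the backward-reachability closure for $P$. Your insistence on using the tail component recorded at the terminating (R3) step is slightly more careful than the paper's proof text, which just says to pick a tail component with no outgoing arc. Strictly, only $r<\abs S$ depends on that choice, since the proof of Lemma~\ref{lem:m-terminal} works for every tail component.
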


\begin{proof}
Pass to the cost representation $\cost_i=-v_i$, which is dichotomous in the sense of Definition~\ref{def:dichcost}. Run the partial-allocation algorithm of Tao et al.~\cite{tao2025existence}, and let $X$ be its terminal partial allocation, with residue
\[ R:=R(X), \qquad r:=\abs R.
\]

If $r=0$, then $X$ is complete and satisfies~\eqref{eq:m-pef}. Hence $(X,\mathbf 0)$ is an envy-free solution. Since every envy-free allocation is EF1, $X$ is also EF1.

If $r\geq 1$, let $G=(\agents,E)$ be the terminal equality graph and let $S$ be the tail strongly connected component of $G$ fixed in \Cref{sec:m-terminal}. Construct $T$, $\alloc$, and $\subsidy$ as in \Cref{sec:m-completion}. Then $\alloc$ is complete, Theorem~\ref{thm:m-completion} shows that $(\alloc,\subsidy)$ is envy-free, \eqref{eq:m-subsidy} and \eqref{eq_1} give \[
    \subsidy\in\set{0,1}^n
    \qquad\text{and}\qquad
    \sum_{i\in\agents}\subsidy_i\leq n-1,
\] and Corollary~\ref{cor:ef1} shows that $\alloc$ is EF1.

It remains to verify polynomial-time computability. The partial-allocation algorithm of Tao et al.~\cite{tao2025existence} runs in polynomial time. If $r\geq 1$, the terminal equality graph has $n$ vertices and can be constructed from the $n^2$ values $\cost_i(X_j)$, requiring one value-oracle query for each pair $(i,j)$. Its strongly connected components can be computed in polynomial time, after which a tail component $S$ is obtained by selecting one with no outgoing arc. By Theorem~\ref{thm:m-twyz}, $r\leq n-1$, so constructing $\alloc$ requires at most $n-1$ additional assignments. Finally, $P$ can be computed by a backward reachability search from $T$, restricted to vertices outside $S$, in $O(n^2)$ time. Hence, the entire procedure runs in polynomial time. \end{proof}

As noted in the introduction, the total-subsidy bound in Theorem~\ref{thm:m-main} is tight: there are instances for which every complete envy-free solution requires total subsidy at least \(n-1\).

\subsection{Pareto optimality and subsidies}
In this section, we show that the unit-subsidy guarantee need not be compatible with Pareto optimality of the underlying allocation. Pareto optimality is evaluated with respect to the allocation of chores, without taking subsidies into account. Formally, a complete allocation $\allocB$ \emph{Pareto dominates} a complete allocation $\alloc$ if
\[\cost_i(\allocB_i)\leq \cost_i(\bundle i)
    \qquad\text{for every }i\in\agents,
\]
with strict inequality for at least one agent. A complete allocation is \emph{Pareto optimal} if it is not Pareto dominated by any other complete allocation.

\begin{proposition}
\label{prop_1}
There exist negative dichotomous instances, even with two agents and identical binary submodular cost functions, for which no Pareto-optimal allocation admits an envy-free subsidy vector in $\{0,1\}^n$.
\end{proposition}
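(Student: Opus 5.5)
The plan is to give one explicit two-agent instance with a concave-of-cardinality cost. With identical costs and two agents, the minimal subsidy for an allocation $\alloc=(A_1,A_2)$ is easy to compute. The envy-freeness inequalities \eqref{eq:m-ef} read $\cost(A_1)-\subsidy_1\le \cost(A_2)-\subsidy_2$ and $\cost(A_2)-\subsidy_2\le \cost(A_1)-\subsidy_1$. These force $\subsidy_1-\subsidy_2=\cost(A_1)-\cost(A_2)$ exactly. So a vector in $\{0,1\}^2$ exists iff $\abs{\cost(A_1)-\cost(A_2)}\le 1$. It therefore suffices to build an instance in which every Pareto-optimal allocation has a cost gap of at least $2$.

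I would take three chores $\items=\{a,b,d\}$ and $\cost_1=\cost_2=\cost$ with $\cost(S)=\min(\abs S,2)$. This function is dichotomous in the sense of Definition~\ref{def:dichcost}. Adding a chore raises $\min(\abs S,2)$ by $1$ if $\abs S\le 1$ and by $0$ otherwise. It is also submodular, being a concave function of cardinality, since its marginals are non-increasing in $\abs S$. The corresponding valuations $v_i=-\cost$ are therefore negative dichotomous.

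Next I would classify the complete allocations by bundle sizes $(\abs{A_1},\abs{A_2})\in\{(3,0),(2,1),(1,2),(0,3)\}$. The resulting cost vectors are $(2,0)$, $(2,1)$, $(1,2)$ and $(0,2)$ respectively. The allocation of sizes $(2,1)$ is Pareto dominated by giving all three chores to agent~$1$: agent~$1$ stays at cost $2$ and agent~$2$ drops to $0$. Symmetrically, $(1,2)$ is dominated by giving everything to agent~$2$. The two extreme allocations are Pareto optimal. In each, one agent has cost $0$, which cannot be lowered, and the other agent could only lower her cost to $1$ or $0$ by passing chores to the zero-cost agent, raising that agent's cost. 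Hence the Pareto-optimal allocations are exactly those giving all chores to a single agent. Their cost gap is $2$, so by the first paragraph the agent holding all chores must receive a subsidy exceeding the other's by $2$. No vector in $\{0,1\}^2$ works.

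There is no real obstacle here. The only care needed is the exhaustive check that the split allocations are dominated and the extreme ones are not, which the small size of the instance makes immediate. A $4$-chore variant with the same cost $\min(\abs S,2)$ would work identically, should a more symmetric example be preferred.
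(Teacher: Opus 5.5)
Your proposal is correct and takes essentially the same route as the paper: the same two-agent, three-chore instance with identical cost $c(S)=\min\{|S|,2\}$, the same size-profile classification showing that the split allocations are dominated by the concentrated ones, and the same subsidy-gap-of-two obstruction. Your observation that identical costs force $p_1-p_2=c(A_1)-c(A_2)$ exactly is a slightly sharper form of the single envy inequality the paper uses, but the argument is otherwise the same.
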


\begin{proof}
Consider two agents $\agents=\set{1,2}$ and three chores $\items=\set{a,b,c}$. Both agents have the same cost function
\[ \cost_1(S)=\cost_2(S)=\min\{\abs S,2\}.
\]
Every marginal cost belongs to $\{0,1\}$, so the cost function is dichotomous. Moreover, it is submodular, since $k\mapsto\min\{k,2\}$ is a nondecreasing concave function of cardinality. It is not additive, since
\[\cost_i(\set{a,b,c})=2<3
    =\sum_{g\in\set{a,b,c}}\cost_i(\set g).
\]
Every complete allocation has, up to symmetry, either bundle-size profile $(3,0)$ or $(2,1)$. First, consider the allocation in which agent $1$ receives all three chores. Its cost profile is $(2,0)$. The possible cost profiles of complete allocations are
\[(2,0),\qquad (2,1),\qquad (1,2),\qquad (0,2),
\]
and none of the latter three Pareto dominates $(2,0)$. Hence, the concentrated allocation is Pareto optimal. By symmetry, the allocation in which agent $2$ receives all three chores is also Pareto optimal. However, a concentrated allocation cannot be made envy-free with unit subsidies. If agent $1$ receives all three chores, envy-freeness requires
\[
    2-p_1\leq -p_2,
\] and hence
\[
    p_1-p_2\geq 2.
\]
Thus, no subsidy vector in $\{0,1\}^2$ suffices. The case in which agent $2$ receives all three chores is symmetric.

Now consider a split allocation, with bundle sizes $(2,1)$ up to symmetry. Its cost profile is $(2,1)$, so it can be made envy-free with subsidies $(1,0)$, up to symmetry. Nevertheless, it is not Pareto optimal. Assigning the remaining chore to the agent who already holds two chores changes the cost profile from $(2,1)$ to $(2,0)$: the cost of that agent remains unchanged, while the other agent's cost strictly decreases.

Thus, the Pareto-optimal allocations are precisely the two concentrated allocations, and each requires a subsidy of at least $2$ for the agent receiving all three chores. Conversely, every allocation that admits an envy-free subsidy vector in $\{0,1\}^2$ is Pareto dominated. Therefore, no allocation is simultaneously Pareto optimal and envy-free with subsidies in $\{0,1\}^2$.
\end{proof}

Thus, the unit-subsidy guarantee of Theorem~\ref{thm:m-main} cannot, in general, be strengthened to require Pareto optimality.

\appendix

\section{Failure of an Incremental Unit-Subsidy Extension}
\label{sec:whynotbkns}

\begin{example}[Failure of a BKNS-type incremental extension]
\label{app:extension-failure}

Consider three agents $N=\{1,2,3\}$ and three chores $M=\{x,y,z\}$. We work in cost form, writing
\( c_i=-v_i \)
for agent \(i\)'s non-negative cost function. Thus, an allocation \(\mathcal A=(A_1,\ldots,A_n)\) together with a subsidy vector \(p=(p_1,\ldots,p_n)\) is envy-free iff
\[ c_i(A_i)-p_i\le c_i(A_j)-p_j
\qquad\text{for all } i,j\in N.
\]
The cost functions depend only on bundle size: \[ c_1(S)=\max\{0,|S|-1\}, \qquad c_2(S)=c_3(S)=\min\{|S|,2\}. \] Thus, \[
\begin{array}{c|cccc}
|S| & 0 & 1 & 2 & 3\\
\hline
c_1(S) & 0 & 0 & 1 & 2\\
c_2(S)=c_3(S) & 0 & 1 & 2 & 2 .
\end{array}
\] Every marginal cost belongs to $\{0,1\}$, so these are dichotomous cost functions. Suppose  $z$ is currently unallocated, and the partial allocation is \[
X_1=\{x,y\},
\qquad
X_2=X_3=\emptyset.
\] Note that this partial allocation is itself reachable from the empty allocation through successive insertions while maintaining the unit-subsidy invariant:
\[
(\emptyset,\emptyset,\emptyset)
\longrightarrow
(\{x\},\emptyset,\emptyset)
\longrightarrow
(\{x,y\},\emptyset,\emptyset),
\] with pointwise-minimal subsidy vectors
\(
(0,0,0), (0,0,0),\text{ and } (1,0,0),
\) respectively. The pointwise-minimal subsidy vector of $(X_1,X_2,X_3)$ is \[
p^*=(1,0,0).
\]

Indeed, agent $1$ has cost $1$ for her own bundle and cost $0$ for either empty bundle, so she requires one unit of subsidy. Agents $2$ and $3$ have no envy, since they have cost $0$ for their own empty bundles and cost $2$ for $X_1$.

We claim that the remaining chore $z$ cannot be inserted so that the resulting allocation admits an envy-free subsidy vector in $\{0,1\}^3$, even if the existing bundles $\{x,y\},\emptyset,\emptyset$ may first be permuted among the agents.

If $z$ is added to the two-chore bundle, the resulting bundle sizes are $(3,0,0)$. Every agent evaluates a three-chore bundle at cost $2$ and an empty bundle at cost $0$. Hence, the agent receiving the three-chore bundle requires a subsidy at least two units larger than that of an agent receiving an empty bundle. Thus, no subsidy vector in $\{0,1\}^3$ can make the resulting allocation envy-free.

It remains to consider the case in which $z$ is added to an empty bundle. The resulting bundle sizes are $(2,1,0)$, up to permutation. If an agent in $\{2,3\}$ receives the two-chore bundle, she evaluates it at cost $2$ and the empty bundle at cost $0$, again requiring a subsidy difference of at least two.

Therefore, the only remaining possibility is that agent $1$ receives the two-chore bundle. Let $i\in\{2,3\}$ receive the singleton and let $j$ receive the empty bundle. Envy-freeness for agent $1$ with respect to $i$ requires
\[
1-p_1\le 0-p_i,
\]
and hence
\[
p_1\ge p_i+1.
\]
Envy-freeness for agent $i$ with respect to $j$ requires \[
1-p_i\le -p_j,
\] and hence
\[
p_i\ge p_j+1.
\] Combining the two inequalities gives
\[
p_1\ge p_j+2,
\] which is impossible for $p\in\{0,1\}^3$. Hence, no assignment of the additional chore $z$, even after an arbitrary permutation of the previously allocated bundles, preserves the unit-subsidy guarantee. Notice that this is an obstruction to the incremental construction, not to the instance itself. The complete allocation
\[
(\{x\},\{y\},\{z\})
\] is envy-free without subsidies: agent $1$ evaluates every singleton at cost $0$, while agents $2$ and $3$ evaluate every singleton at cost $1$. \hfill $\square$
\end{example}

\bibliographystyle{alpha}
\bibliography{references}

\end{document}